\DeclareMathAlphabet{\mathantt}{OT1}{antt}{li}{it}
\DeclareMathAlphabet{\mathpzc}{OT1}{pzc}{m}{it}
\newtheorem{theorem}{Theorem}
\newtheorem*{definition}{Definition}
\newtheorem{lemma}[theorem]{Lemma}
\DeclareFontFamily{OT1}{pzc}{}
\DeclareFontShape{OT1}{pzc}{m}{it}%
  {<-> s * [1.1] pzcmi7t}{}
\DeclareMathAlphabet{\mathpzc}{OT1}{pzc}%
                     {m}{it}
\DeclareMathOperator{\argmax}{\arg\max}
\title{Age-Optimal UAV Scheduling for Data Collection\\ with Battery Recharging}
\author{
}
\author{Ghafour~Ahani,
        Di~Yuan,~\IEEEmembership{Senior Member,~IEEE,}
        and~Yixin~Zhao
\IEEEcompsocitemizethanks{\IEEEcompsocthanksitem G. Ahani and D. Yuan are with the Department
of Information Technology, Uppsala University, Sweden (Emails: \{ghafour.ahani, di.yuan\}@it.uu.se).\protect\\
\IEEEcompsocthanksitem Y. Zhao is with school of Automation, Nanjing University of Science and Technology, China  (Email: yixin.zhao@njust.edu.cn).}
}
\begin{document}

\maketitle
\begin{abstract}
We study route scheduling of a UAV for data collection from remote sensor nodes (SNs) with battery recharging. The freshness of the collected information is captured by the metric of age of information (AoI). The objective is to minimize the average AoI cost of all SNs over a scheduling time horizon. We prove that the problem is NP-hard via a reduction from the Hamiltonian path. Next, we prove tractability of the problem for a symmetric scenario. For problem solving, we develop an algorithm based on graph labeling. Finally, we show the effectiveness of our algorithm in comparison to greedy scheduling.
\end{abstract}
\begin{IEEEkeywords}
 Age of information, data collection,  path planning, scheduling,  UAV.
\end{IEEEkeywords}

\IEEEpeerreviewmaketitle
\section{Introduction}
\vspace{-2mm}
\subsection{Motivations}
Recently, UAVs are becoming employed for data collection from sensor nodes (SNs) in remote areas~\cite{Zeng2016}.
A UAV can travel to  hard-to-reach areas, collect information, and carry them back to a base station (BS) for data analysis. The information should be delivered to the BS in a timely manner. To this end, age of information~(AoI) is a relatively newly introduced performance metric that captures the freshness of received information. It is defined as the amount of time elapsed with respect to the generation time of latest received information~\cite{Kaul2012}.

The works~\cite{liu2018,tong2019,jia2019,tri2019,abd2018} have studied data collection via UAV with objectives related to AoI. In~\cite{liu2018}, assuming Euclidian distances between the SNs, maximum and average AoIs are minimized via dynamic programming and genetic algorithms. In~\cite{tong2019}, the authors extended the system model in~\cite{liu2018} to the scenario where  the UAV can collect information from a set of SNs at a so-called data collection point. They proposed an SN association and trajectory planning policy to minimize the maximum AoI of all SNs. In~\cite{jia2019}, the authors minimize the average AoI under energy constraint of SNs. They proposed a new data acquisition model for uploading data, consisting of three modes, namely hovering, flying, and hybrid. In~\cite{tri2019}, an AoI-optimal data collection and dissemination problem on graphs is studied, where a UAV flies along the randomized trajectory to minimize the average and maximum AoIs. In~\cite{abd2018}, UAV is used as a mobile relay between a source-destination pair to minimize the maximum AoI.

 The aforementioned works assumed that all SNs must be visited by the UAV before flying back to the BS. However, this may not be optimal. For example, consider two SNs such that the BS is at the middle point of the straight line between the two SNs. It is straightforward to see that it is not optimal to visit both SNs first and then fly to the BS for delivering the collected data. In addition, these studies considered the scenario of SNs of equal importance, whereas in reality the SNs, depending on the application, may have SN-specific cost functions of AoI. Hence, which SNs to visit for data collection and when to fly to the BS for delivering the collected data are two key questions in optimal data collection via UAV.
In addition, the battery consumption of UAV has to be considered as the amount of battery energy may allow for visiting only some subset of the SNs, before the UAV has to get to the BS for charging.  Waiving these limitations calls for further investigation.


\vspace{-3mm}
\subsection{Contributions}
We study optimal scheduling of a UAV for collecting  data from a set of SNs over a given time horizon. The UAV is of limited battery capacity, and for each SN a specific AoI cost function is defined. Our contributions are as follows:
\begin{itemize}
    \item We provide complexity analysis. First, we formally prove that the problem is NP-hard via a reduction from the Hamiltonian path problem.  Next, we prove when all SNs have uniform travel time from the BS and  a common AoI cost function, the problem is polynomial-time solvable.
   \item For problem solving, we develop a  polynomial-time solution method based on the concept of graph labeling and time slicing. The number of labels in the algorithm naturally acts as a control parameter for the trade-off between computational effort and solution optimality.
   \item We conduct simulations to show the effectiveness of our solution approach, by comparing it to a greedy schedule. Our solution approach outperforms the greedy solution.
\end{itemize}

\vspace{-2mm}
\section{System Model}
The system scenario consists of a BS, a UAV with limited battery capacity, and a set of $S$ SNs with index set $\mathcal{S}=\{1,\dots,S\}$. The UAV is utilized for collecting information from the SNs and carry them to the BS  over a time horizon of length $T$. The system scenario is shown in Fig.~\ref{SystemScenario}.

\begin{figure}[ht!]
\centering
\includegraphics[scale=0.45]{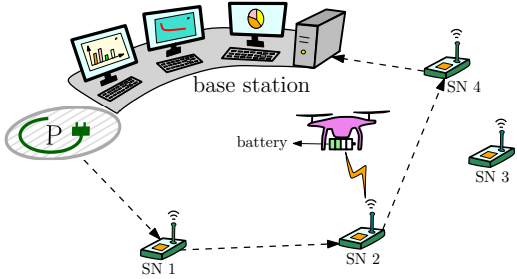}
\begin{center}
\vspace{-3mm}
\caption{System scenario.}
\label{SystemScenario}
\end{center}
\end{figure}

We define $\mathcal{S^+}=\{0\} \cup \mathcal{S}$ the index set of all SNs and the BS, in which index $0$ is reserved for the BS. Traveling from  $i\in \mathcal{S^+}$ to  $j\in \mathcal{S^+}$ takes $t_{ij}$ amount of time and consumes $b_{ij}$ amount of energy. 
The UAV repeatedly departs from the BS, visits a subset of SNs $\mathcal{S}^\prime\subseteq\mathcal{S}$, collects information, and flies back to the BS for information delivery. For each visited SN, the UAV hovers over the SN and establishes a  communication link with the SN. The SN senses the information, generates a data packet, and transmits it to the UAV. 
The corresponding time and energy required for these operations, without loss of generality, are embedded into  $t_{ij}$ and  $b_{ij}$, respectively.

 The UAV  has to go back to the BS before its battery energy becomes exhausted. Denote by $B$ the battery capacity. Each time the UAV returns to the BS, it has the choice of getting partially or fully recharged. Denote by $g(\cdot)$, the recharging function of the battery. Note that we do not assume that $g(\cdot)$ has to be linear. At time $t$, $u(s,t)$ stands for the timestamp of the most recently received information of SN $s$ available in the BS. Denote by $a(s,t)$ the AoI of SN $s$ at time $t$. Thus, the AoI at time $t$ can be calculated as $a(s,t)=t-u(s,t)$. The AoI vector of all SNs at time $t$ is represented by $\mathbf{a}(t)$.
Notation $f_{s}(\cdot)$ is used for the AoI cost function of SN $s\in \mathcal{S}$. The problem consists in age-optimal UAV scheduling (AUS) for data collection from the SNs over the time horizon of duration $T$, with the objective of minimizing the overall average AoI cost of all SNs.

\newtheorem*{remark}{Remark}
\begin{remark}
We say $\mathbf{a}(t)<\mathbf{a}'(t)$ if and only if  $a(s,t)\le a'(s,t)$ for $s=1,2,\dots,S$ and there exists at least one index $j$ for which  $a(j,t)< a'(j,t)$.
\end{remark}

\section{Complexity analysis}
\begin{theorem}
AUS is NP-hard.
\end{theorem}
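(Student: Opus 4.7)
The plan is to prove NP-hardness via a polynomial-time reduction from the Hamiltonian path (HP) problem. Given an undirected graph $G=(V,E)$ with $n=|V|$, I will construct an AUS instance whose optimal average AoI cost attains a computable threshold $C^\star$ if and only if $G$ admits a Hamiltonian path.

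Construction. Take $\mathcal{S} = V$. Set $t_{0i} = b_{0i} = 1$ for every $i$, and for inter-SN pairs put $t_{ij} = b_{ij} = 1$ when $(i,j) \in E$ and $t_{ij} = b_{ij} = n+2$ (infeasibly large) otherwise. Choose $f_s(a) = a$ for every $s$, battery capacity $B = n+1$, recharging function $g(\cdot) \equiv 0$, and horizon $T = 3n$. The energy budget with no recharging caps the total energy across all trips at $n+1$, limiting the cumulative number of inter-SN hops to $n-1$; consequently a schedule visits all $n$ distinct SNs only via a single-trip simple path of length $n-1$ in $G$, i.e., a Hamiltonian path.

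Forward direction. If $G$ has a Hamiltonian path $v_1,\dots,v_n$, the schedule $0 \to v_1 \to \cdots \to v_n \to 0$ uses time $n+1$ and energy $n+1 = B$. Data of $v_i$ is delivered at time $n+1$ with timestamp $c_i = i$, and a direct evaluation of $\frac{1}{T}\int_0^T a(s,t)\,dt$, summed over $s$, yields $C^\star = \tfrac{nT}{2} - \tfrac{n(n+1)(T-n-1)}{2T}$. Backward direction. Suppose some schedule attains cost $\leq C^\star$. Non-edge hops exceed both $T$ and $B$, so every inter-SN hop lies in $E$. For a single simple-path trip visiting $k$ SNs the cost becomes $\tfrac{nT}{2} - \tfrac{k(k+1)(T-k-1)}{2T}$, and the function $h(k) = k(k+1)(T-k-1)$ satisfies $h(k+1) - h(k) = (k+1)(2T - 3k - 4) > 0$ for every $k \leq n-1$ when $T = 3n$; hence $k = n$ is the unique minimizer. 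Any multi-trip schedule incurs at least four BS-SN hops and covers at most $n-1$ distinct SNs, yielding an even larger cost. Thus cost $\leq C^\star$ forces a single trip visiting all $n$ SNs along $G$-edges, i.e., a Hamiltonian path of $G$.

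The main obstacle is the backward direction's cost trade-off: a shorter trip delivers earlier, and it must be verified that the resulting savings cannot outweigh the penalty of leaving some SN unvisited. The monotonicity of $h(k)$ on $[0,n]$ for $T = \Theta(n)$ resolves this, while the battery cap combined with $g \equiv 0$ controls the otherwise combinatorial space of multi-trip schedules by forcing them to visit strictly fewer SNs. Since the reduction is polynomial in $n$ and HP is NP-hard, AUS is NP-hard.
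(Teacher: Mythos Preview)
Your reduction has a genuine gap in the backward direction. The cost formula you quote for a single trip visiting $k$ SNs,
\[
\frac{nT}{2}-\frac{k(k+1)(T-k-1)}{2T},
\]
is valid only for a trip that departs at time $0$. Nothing in the construction forces immediate departure, and with the linear cost $f_s(a)=a$ a delayed departure can strictly lower the cost: the savings from an SN collected at time $c$ and delivered at time $d$ equal $c(T-d)$, which is \emph{increasing} in $c$ for fixed $T-d$. Your monotonicity argument for $h(k)$ and your one-line dismissal of multi-trip schedules (``covers at most $n-1$ distinct SNs, yielding an even larger cost'') both ignore this degree of freedom.

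Here is a concrete counterexample. Take $n=3$, so $T=9$, $B=4$, and let $G$ have the single edge $\{1,2\}$; vertex $3$ is isolated, hence $G$ has no Hamiltonian path. Your threshold is $C^\star=\tfrac{61}{6}$. Now consider the two-trip schedule: wait until time $2$; trip one $0\!\to\!3\!\to\!0$ over $[2,4]$ (timestamp $3$); trip two $0\!\to\!1\!\to\!0$ over $[4,6]$ (timestamp $5$). Energy used is $4=B$. The integrated AoI is $25.5+40.5+25.5=91.5$, so the cost equals $91.5/9=\tfrac{61}{6}=C^\star$ exactly, despite $G$ having no Hamiltonian path. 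Thus ``optimal cost $\le C^\star$'' does \emph{not} imply a Hamiltonian path, and the reduction fails. The paper avoids this pitfall by using a step-function cost (zero below a carefully chosen threshold, a large constant above it), which makes the decision binary and removes any benefit from delaying or splitting trips; if you want to keep a linear cost you would at minimum need to optimise over the departure time in your analysis and then rule out all delayed and multi-trip alternatives, which is considerably more work.
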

\begin{proof}
The proof is based on a polynomial-time reduction
from the Hamiltonian path problem that is NP-complete\cite{garey1979computers}. In the Hamiltonian path problem, there are a set of nodes $\mathcal{N}$ and a set of edges $\mathcal{E}$. The task is to determine if there is a path visiting every node exactly once.

We construct a reduction from the Hamiltonian path problem as
follows. We set $\mathcal{S}=\mathcal{N}$. Consider any two SNs $i$ and $j$, $i,j \in \mathcal{S}$. If there is a link in $\mathcal{E}$  between the corresponding nodes of $i$ and $j$ in $\mathcal{N}$, we set $t_{ij}=4$, otherwise $t_{ij}=16$. We define an edge between the BS and each SN $s\in\mathcal{S}$ with $t_{0s}=8$. The value of $b_{ij}$ can be any positive value, e.g., $b_{ij} = t_{ij}$. Let $B=\sum_{(i,j) \in\mathcal{E}}b_{ij}$. The time horizon is $T=4S+14$ and an AoI cost function is defined for all SNs as follows:
\begin{equation}
\begin{aligned}
f_s(x) = \left\{
        \begin{array}{ll}
            0 & \quad x \leq 4S+13 \\
            100 & \quad x > 4S+13
        \end{array}
    \right.
\end{aligned}
\end{equation}

Now, solving the defined instance of AUS will solve the Hamiltonian path problem. Because, if the overall AoI cost at the optimum is zero, it means that the UAV departs from the BS at time zero, visits each SN $s\in \mathcal{S}$ exactly once, flies back to the BS, and delivers the collected data at time $4S+12$.  For any other tour, either the data of at least one SN is not collected within the time horizon and hence the AoI is $100$, or an SN is visited twice in which the UAV can not deliver the collected data before time point $4S+15$, or the tour is not using the edges of the original graph and the UAV can not deliver the collected data before time point $4S+24$. Therefore, the AoI cost of at least one SN is $100$ in time interval $(4S+13,4S+14]$, and the overall average AoI cost has to be at least $\frac{100}{S(4S+14)}$. Hence the conclusion. \end{proof}

In the following we consider a special case of AUS, referred to as \emph{symmetric AUS} (S-AUS), for which we prove its tractability. This problem corresponds to the scenario where SNs are located on a circle and the BS is at the center. 
\begin{definition}{In S-AUS}
 all SNs have uniform travel time $r$ from the BS and a common AoI cost function $f(\cdot)$. The battery, when fully charged, allows for an operation time of $2r$. 
\end{definition}
We use the term trip to refer to getting to one SN from the BS, collecting the information, and returning to the BS.  By the definition of S-AUS, a schedule of the UAV has to consist of a sequence of trips.

\begin{lemma}\label{lem2}
For S-AUS, if the UAV performs one trip starting at time $t_0$ in time interval $[t_0,t_1]$ where $t_1-t_0\ge2r$, then the trip to the SN having the largest AoI is optimal. 
\end{lemma}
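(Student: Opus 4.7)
The plan is to isolate a single $s^*$-dependent term in the aggregate cost over $[t_0,t_1]$ and then apply monotonicity of the AoI cost function to minimize it.

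First I would describe the AoI evolution explicitly. For a trip targeting any SN $s^*\in\mathcal{S}$, the UAV takes the sample at $t_0+r$ and delivers it to the BS at $t_0+2r$, so the AoI of $s^*$ equals $a(s^*,t_0)+(t-t_0)$ on $[t_0,t_0+2r]$ and drops to $t-t_0-r$ on $[t_0+2r,t_1]$, while every other SN $s\neq s^*$ keeps AoI $a(s,t_0)+(t-t_0)$ throughout $[t_0,t_1]$. Since only one trip is performed and $t_1-t_0\ge 2r$, the UAV simply waits at the BS on $[t_0+2r,t_1]$, so these are the only possible trajectories.

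Next I would sum $\int_{t_0}^{t_1}f(a(s,t))\,dt$ over $s\in\mathcal{S}$ and split the time interval at $t_0+2r$. The contribution from $[t_0,t_0+2r]$ is identical for every choice of $s^*$ because no delivery occurs before $t_0+2r$. The contribution from $[t_0+2r,t_1]$ can be rewritten as
\[
\sum_{s\in\mathcal{S}}\int_{t_0+2r}^{t_1}\! f\bigl(a(s,t_0)+(t-t_0)\bigr)\,dt+\int_{t_0+2r}^{t_1}\!\bigl[f(t-t_0-r)-f\bigl(a(s^*,t_0)+(t-t_0)\bigr)\bigr]\,dt,
\]
where only the second integral depends on the choice of $s^*$.

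Finally, since the AoI cost function $f$ is non-decreasing (the standard and natural assumption for AoI), the integrand in the $s^*$-dependent term is non-positive and pointwise non-increasing in $a(s^*,t_0)$; the integral is therefore minimized by taking $s^*\in\argmax_{s\in\mathcal{S}}a(s,t_0)$, which is exactly the claim. I do not anticipate any serious obstacle: once the AoI trajectories under the two possible behaviors of each SN are written down, the argument reduces to a bookkeeping step followed by a one-line monotonicity inequality, and the only property of $f$ being invoked is monotonicity.
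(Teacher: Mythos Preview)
Your proposal is correct and follows essentially the same approach as the paper: both write out the integrated AoI cost over $[t_0,t_1]$ for a trip to an arbitrary SN, isolate the part that depends on the choice of target, and invoke monotonicity of $f$ to conclude that the SN with largest initial AoI minimizes the cost. The only cosmetic differences are that the paper performs a direct pairwise subtraction $h_{i^*}-h_i$ (after a change of variable to integrate over AoI values rather than time), whereas you split the interval at $t_0+2r$ and use an add-and-subtract step to exhibit the single $s^*$-dependent integral; these are equivalent bookkeeping choices.
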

\begin{proof}
Denote by $h_{i}$ the average AoI cost of SN $i\in\mathcal{S}$ for interval $[t_0,t_1]$ with duration $\Delta t=t_1-t_0$. This cost can be calculated as:
\begin{equation}
\begin{aligned}
h_{i}=\frac{1}{\Delta t}\Big(\sum_{s \in S\setminus\{i\}}\int_{a(s,t_0)}^{a(s,t_0)+\Delta t}f(x)dx+\int_{a(i,t_0)}^{a(i,t_0)+2r}f(x)dx
\\+\int_{r}^{\Delta t-r}f(x)dx\Big)
\end{aligned}
\end{equation}

Denote by $i^*$ the SN with largest AoI. It can be verified that:
\begin{equation}
\begin{aligned}\label{eq:lem2}
h_{i^*}-h_{i}=\frac{1}{\Delta t}\Big(\int_{a(i,t_0)+2r}^{a(i,t_0)+\Delta t}f(x)dx-\int_{a(i^*,t_0)+2r}^{a(i^*,t_0)+\Delta t}f(x)dx\Big)
\end{aligned}
\end{equation}
Since $ a(i,t_0) \le a(i^*,t_0)$ and $f$ is monotonically increasing, the result of \eqref{eq:lem2} is non-negative and we have $h_{i^*}\le h_{i}$.
\end{proof}

As all trips are of same length for S-AUS, we consider scheduling the UAV for any number of consecutive trips. In the following, we present and prove the optimal solution.

\begin{theorem}
Consider S-AUS and $M$ consecutive trips, visiting the SN with
the largest AoI in each trip results in optimum average AoI.
\end{theorem}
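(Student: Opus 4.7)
The plan is induction on $M$, with Lemma~\ref{lem2} supplying the base case $M=1$. To close the inductive step, I would strengthen the hypothesis with a monotonicity property: the optimum cost of $M'$ consecutive trips, viewed as a function of the initial AoI vector, is non-decreasing componentwise after the vector is sorted in ascending order. In the symmetric setting the optimum depends only on the sorted vector (since all SNs share $r$ and $f$), so applying any schedule that is optimal for a smaller sorted state to a larger one produces pointwise larger AoI trajectories for every SN, and the monotonicity of $f$ gives the claim.

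For the main step, let $\mathbf{a}$ denote the AoI vector at the start of the $M$ trips and let $i^*$ be an SN of largest AoI. Applying Lemma~\ref{lem2} to the interval from $t_0$ to the start of the next trip shows that the cost accumulated over this first cycle is no larger when the greedy choice $i^*$ is made than when any other SN $s$ is visited. For the remaining $M-1$ trips, I would compare the AoI vectors at the start of trip two. Under either choice the visited SN has its AoI reset while the others age by the same amount, so the two post-cycle multisets differ only in which of $a(i^*,t_0)+2r$ and $a(s,t_0)+2r$ is removed and replaced by $r$. Since $a(s,t_0)\le a(i^*,t_0)$, a short bookkeeping on the sorted vectors gives that, position by position, the post-cycle state under the greedy move is componentwise at most the one under the alternative. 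Combining this comparison with the strengthened hypothesis on the optimum cost of the continuing $M-1$ trips, and with the first-cycle inequality from Lemma~\ref{lem2}, shows that the greedy first move minimizes the total $M$-trip cost; invoking the induction hypothesis on the continuation then certifies that greedy is optimal throughout.

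The main obstacle, in my view, is the familiar worry that a greedy first move could leave the system in a state from which later trips cost more, so Lemma~\ref{lem2} alone does not suffice. Symmetry of S-AUS is what resolves this: it forces the post-cycle state under the greedy move to dominate, in sorted componentwise order, the state under any alternative, and it makes the optimum continuation cost a function only of the sorted vector. Both facts can fail without a common $f$ and a uniform $r$, consistent with the NP-hardness of general AUS. The induction therefore reduces to two clean ingredients: the sorted-componentwise comparison of post-cycle states, and propagation of monotonicity through the inductive hypothesis.
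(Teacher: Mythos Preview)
Your proposal is correct and takes a genuinely different route from the paper. The paper argues by a single exchange: given any schedule $q$, let $k$ be its first non-greedy trip and form $q'$ by replacing trip $k$ with the greedy choice while keeping trips $k{+}1,\dots,M$ \emph{unchanged}; it then claims $\mathbf{a}(t_{k+1})\ge\mathbf{a}'(t_{k+1})$ componentwise and that the identical tail therefore costs at least as much from the $q$-state. You instead run induction on $M$, strengthen the hypothesis with monotonicity of the optimal $(M{-}1)$-trip cost in the \emph{sorted} initial AoI vector, and compare the greedy first move followed by its \emph{optimal} continuation against any alternative first move followed by its optimal continuation.

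The distinction is not cosmetic. The paper's componentwise claim is false as written: if $q$ visits $s$ at trip $k$ and $q'$ visits $i^*$, then $a(s,t_{k+1})=r<a(s,t_k)+2r=a'(s,t_{k+1})$, so the vectors are not comparable under the paper's own Remark. What does hold is precisely the sorted (multiset) inequality you isolate, and that suffices only once the continuation may be relabelled---i.e., once one compares optimal tails rather than the identical tail. Indeed, keeping the same tail can make $q'$ strictly worse than $q$ (take $S=2$, $r=1$, $f(x)=x$, initial ages $(0,1)$, and let $q$ visit SN~$1$ once then SN~$2$ for many trips; swapping trip~$1$ to SN~$2$ increases total cost). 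Your monotonicity lemma, together with the sorted post-cycle comparison, is exactly what legitimises the relabelling and closes the step. The price is one extra ingredient; the payoff is a proof that does not rely on a vector inequality that fails.
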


\begin{proof}
 Denote by $t_1,t_2,...,t_M$ the time points of departures from the BS for trips $1,2,\dots,M$, respectively.
Consider any solution $q$ in which $k$ is the index of the first trip that the UAV does not visit the SN with largest AoI.
Consider another solution $q'$ which is the same as $q$ except that trip $k$ goes to the SN with largest AoI. We show that $q'$ gives a lower cost than that of $q$. 
Consider trips $1,2,...,k$.
Both solutions $q$ and $q'$ visit the same SNs for  trips $1,2,... k-1$, thus both have the same cost for this part of the solution. For trip $k$, by Lemma~\ref{lem2}, $q'$ gives a lower cost than that of $q$. Therefore, $q'$ gives a lower cost for trips $1,2,...,k$. It is easy to see that   $a(t_{k+1}) \ge a'(t_{k+1})$ where  $a'(t_{k+1})$ is the age value associated with solution $q'$.
For trip $k+1,...,M$, as AoI vector of $q$ is larger than that of $q'$ and the same SN are visited in both $q$ and $q'$, the AoI cost of $q$ obviously will not be lower than that of $q'$ after any of the trips.
Hence, $q'$ gives a lower overall AoI cost. This establishes the theorem.
\end{proof}

\vspace{-5mm}
\section{Algorithm Design}
The complexity of AUS motivates the use of sub-optimal algorithms. However, it is desirable to design an algorithm that inherently enables to turn optimality against complexity. To this end, we develop a graph labeling algorithm (GLA) enabled by time slicing. For AUS, we construct a graph in which finding an optimal path provides an approximate solution of AUS. GLA is shown in Algorithm~\ref{GLA2}.
\subsection{Graph Representation}
We slice the time horizon $T$ into a set of $\mathcal{N}=\{1,2,\dots,N\}$ time slots, each of length~$\tau=\frac{T}{N}$. Slot $n \in\mathcal{N}$ is defined for time interval $[(n-1)\tau,n\tau)$. The AoI of SN $s\in\mathcal{S}$ for slot $n\in\mathcal{N}$ is evaluated at the beginning of the slot, and the AoI remains for the entire slot. Hence, the approach provides an approximation of AUS where the solution accuracy depends on the granularity of time slicing.

We define a directed and acyclic graph $\mathcal{G}=(\mathcal{V},\mathcal{A})$. For slot $n\in\mathcal{N}$, we define $S+1$ nodes, giving in total $(S+1)\times N$ nodes. Denote by $v_{sn}$ the node defined for location $s \in \mathcal{S}^+$ and slot $n\in \mathcal{N}$. The arc set $\mathcal{A}$ consists of valid moves among the nodes. Consider two nodes $s \neq s' \in \mathcal{S}^+$ in slots $n$ and $n'$ respectively,  traversing arc~$(v_{sn},v_{s'n'})$
means to depart from  $s$ at the beginning of time slot $n$ and reaching  $s'$ during time slot $n'$. Intuitively, this is a valid move if $n'-n+1\ge t_{ss'}$. Moreover, arc~$(v_{0n},v_{0m})$ represents to stay at the BS for $m-n+1$ slots for battery recharging. Staying at an SN more than the time duration necessary for information collection is not considered. An illustration is provided in Fig.~\ref{GLAGraph}. In Fig.~\ref{GLAExample}, an example with three SNs and a candidate solution is shown.  The solution is a path in which UAV departs from the BS, flies to SN two, then to SN one, and finally returns to the BS. 

A solution to AUS under time slicing corresponds to finding a type of optimal path from $v_{00}$ to
 $v_{0T}$ in graph~$\mathcal{G}$. For optimal path, labeling is a class of algorithms and the well-known Dijkstra's algorithm is a special case of graph labeling algorithms for finding the shortest path. A key difference between shortest path and optimal path for AUS is that, in the latter, a partial path that is globally optimal may not necessarily  be locally optimal.

\begin{figure}
\centering
  \begin{subfigure}[b]{.49\linewidth}
    \includegraphics[width=1\linewidth]{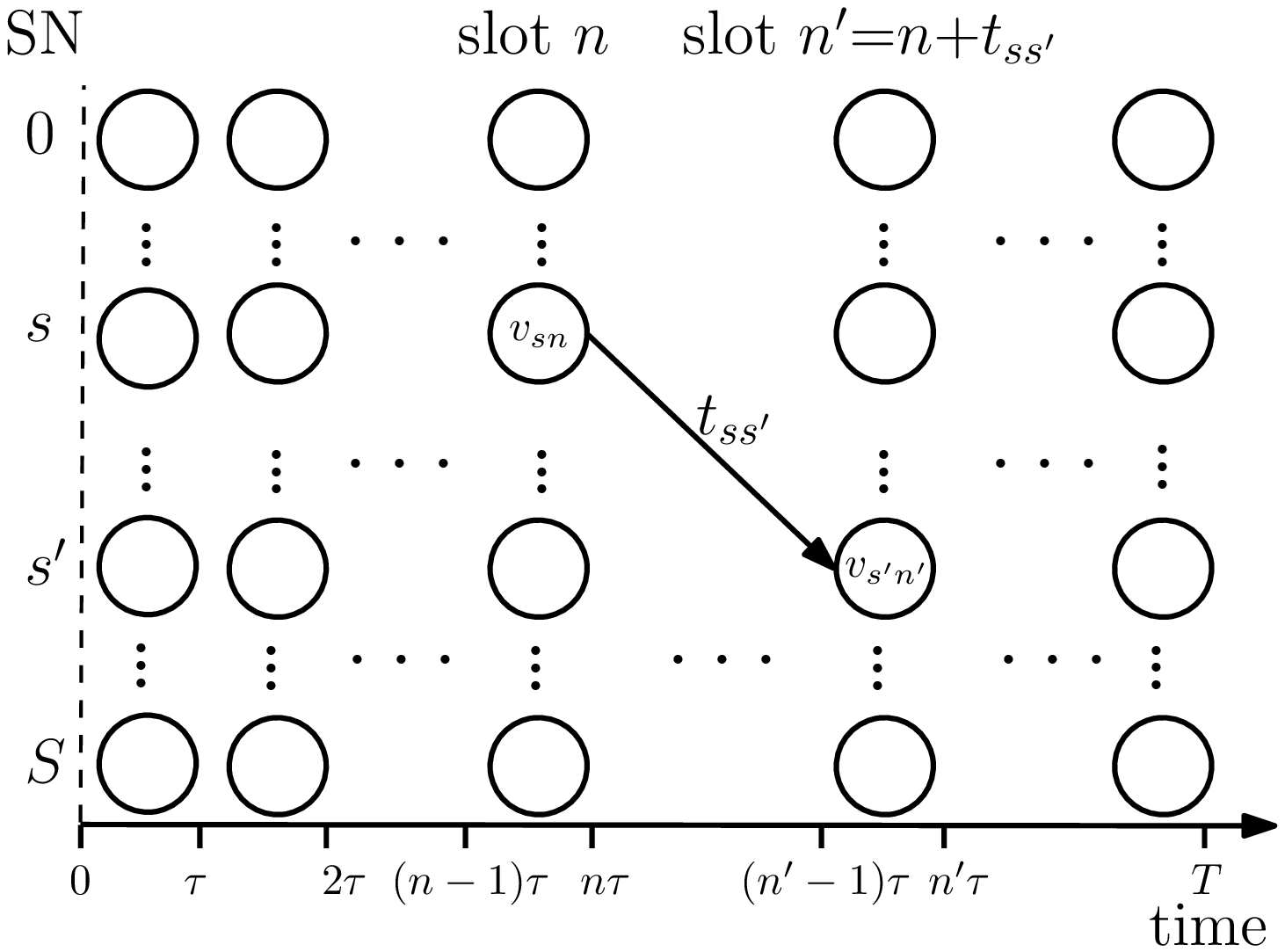}
    \caption{Graph construction.}
    \label{GLAGraph}
  \end{subfigure}
  \raggedleft
  \begin{subfigure}[b]{.49\linewidth}
    \includegraphics[width=0.82\linewidth]{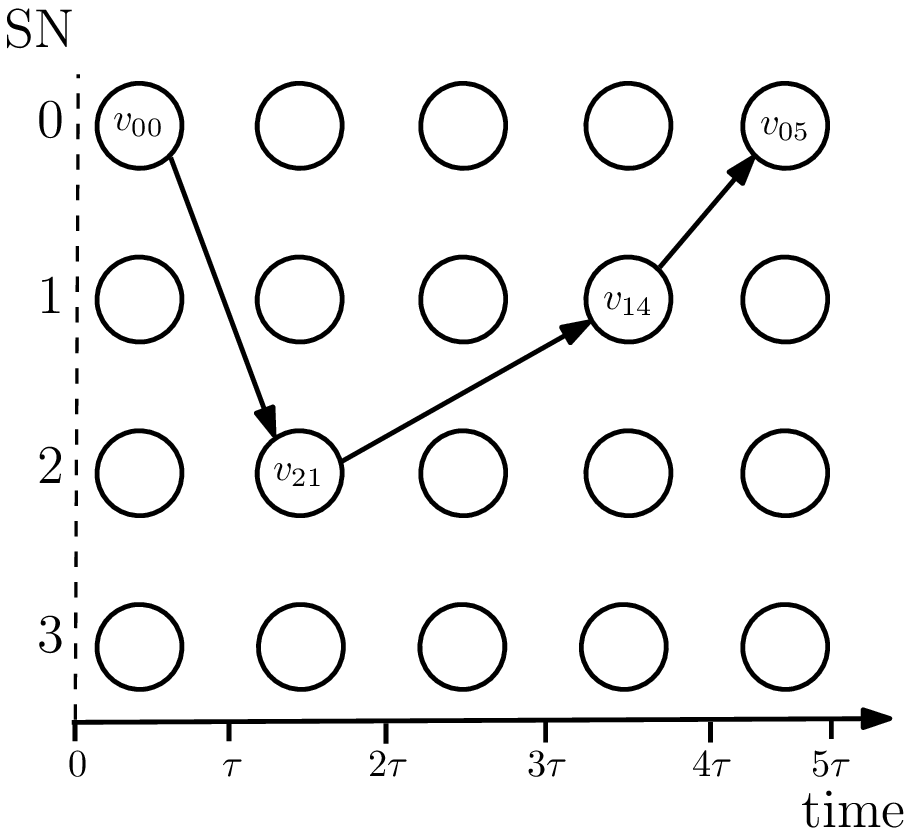}
    \caption{A trip visiting two of the SNs.}
    \label{GLAExample}
  \end{subfigure}
  \caption{Graph representation and an example with three SNs.}
\end{figure}

\subsection{Label Creation}
A label represents a partial solution. The idea is to create labels at graph nodes and store the promising ones. A label $\ell$ in GLA is defined as a tuple of format $(B_\ell,\mathcal{M}_\ell,\mathbf{z}_\ell,\mathbf{a}_\ell,h_\ell,\hat{h}_\ell)$ in which $B_\ell$ is the remaining energy, $\mathcal{M}_\ell$ is a set recording the visited SNs since the most recent departure from the BS, $\mathbf{z}_\ell$ is a vector containing the timestamps of collected information of the SNs in $\mathcal{M}_\ell$,  $\mathbf{a}_\ell$ is a vector containing the AoIs of SNs, $h_\ell$ is the overall average AoI cost for the time slot of the label, and $\hat{h}_\ell$ is another AoI related cost to be explained later.

We use matrix $\mathcal{L}$ of $(S+1)\times N$ elements to store the labels. Entry $\mathcal{L}_{sn}$ is a set that stores the labels for location $s$ and time slot $n$. Denote by $L_{sn}$ the number of stored labels in this entry. GLA stores a maximum of $K$ labels for each entry. Using a larger $K$, GLA stores more partial candidate solutions and hence potentially improves the overall solution. Thus, $K$ can be tuned for the trade-off between solution quality and complexity.

GLA creates labels as follows. Consider node $v_{sn}$ with label $\ell$.
If $s=0$, the UAV can either fly to one of the SNs, or stay at the BS and recharge its battery.
For the latter, the UAV has the choice of staying for $w$ slots where $w=1,2,\dots,N-n$. However, there may be a minimum required amount of time for charging, depending on the charging function $g(\cdot)$; this minimum  is denoted by $w_{min}$ in the number of slots.
If $s\neq0$, the UAV can either travel to the BS for delivering the collected data or fly to one of the SNs in $\mathcal{S}\setminus\mathcal{M}_\ell$. 
Denote by $s'$ a candidate node that the UAV visits in slot $n'$. GLA creates candidate label $\ell'$ for $\mathcal{L}_{s'n'}$. One of the following cases arises:
\begin{itemize}
    \item If $s=s'=0$ and the UAV stays for $w$ time slots: If $w\ge w_{min}$, battery is recharged, and $B_{\ell'}=B_{\ell}+g(w)$. If $x<w_{min}$, battery will not be recharged, and this case is not considered. For the former, $\mathbf{z}_{\ell'}$ and  $\mathcal{M}_{\ell'}$ are the same as those in  $\ell$, i.e.,  $\mathbf{z}_{\ell'}=\mathbf{z}_\ell$ and $\mathcal{M}_{\ell'}=\mathcal{M}_\ell$. All elements of $\mathbf{a}_\ell$ increase by $w\tau$, i.e., $\mathbf{a}_{\ell'}=\mathbf{a}_{\ell}+w\tau$. Finally,  $h_{\ell'}=h_{\ell}+\sum_{s\in\mathcal{S}}\sum_{i=1}^{w}f_s(a_\ell(s)+i\tau)$. Here, $\hat{h}_\ell=h_\ell$. This case corresponds to Lines~\ref{if00S}-\ref{if00F} in Algorithm~\ref{GLA2}.
    
    \item If $s\in \mathcal{S}^+$ and $s'\in \mathcal{S}$: The battery decreases to $B_{\ell'}=B_{\ell}-b_{ss'}$. For $\mathbf{z}_{\ell'}$, $\mathbf{z}_{\ell'}(s)=\mathbf{z}_{\ell}(s)$ for $s\in \mathcal{S}\setminus \{s'\}$, and  $\mathbf{z}_{\ell'}(s')=(n+t_{ss'})\tau$. SN $s'$ is added to the visited SNs, i.e., $\mathcal{M}_{\ell'}=\mathcal{M}_{\ell}\cup \{s'\} $. All elements of $\mathbf{a}_{\ell}$ increase by $t_{ss'}\tau$, i.e., $\mathbf{a}_{\ell'}=\mathbf{a}_{\ell}+t_{ss'}\tau$. Finally, $h_{\ell'}=h_\ell+\sum_{s\in\mathcal{S}}\sum_{i=1}^{t_{ss'}}f_s(a_{\ell}(s)+i\tau)$. See Lines~\ref{if0SS}-\ref{if0SF} in Algorithm~\ref{GLA2}.  In the next section we explain how to calculate $\hat{h}_\ell$ for this case. 
    
    \item If $s\in \mathcal{S}$ and $s'=0$: The amount of battery is reduced to $B_{\ell'}=B_\ell-b_{ss'}$, $\mathbf{z}_{\ell'}=\mathbf{z}_{\ell}$, $\mathcal{M}_{\ell'}=\{\}$, and the AoI vector is calculated as $\mathbf{a}_{\ell'}=(n+t_{ss'})\tau-\mathbf{z}_{\ell'}$. Finally, $h_{\ell'}=h_\ell+\sum_{s\in\mathcal{S}}\Big(\sum_{i=1}^{t_{ss'}-1}f_s(a_{\ell}(s)+i\tau)+f_s(a'_{\ell}(s))\Big)$. Here $\hat{h}_\ell=h_{\ell}$. See Lines \ref{ifS0S}-\ref{ifS0F} in Algorithm~\ref{GLA2}.
       
\end{itemize}
In summary, each time an arc is traversed that corresponds to moving from one node to another, a label of the starting node of the arc is used to result in a label at the end node. The content of the latter depends on the used label and the arc.
\vspace{-2mm}
\subsection{Label Domination}\label{sec:DCA}
Instead of arbitrarily storing labels, GLA utilizes domination rules to eliminate labels that are either evidently non-optimal or less promising.  The dominance check algorithm (DCA) is given in Algorithm~\ref{dominanceCheck}.
Consider a node $v_{0n}$ defined for the BS in slot $n$ and two labels $\ell$ and $\ell'$. If $B_{\ell'}\le B_\ell,~\mathbf{a}_{\ell'}\ge \mathbf{a}_{\ell},~\text{and}~h_{\ell'}>h_{\ell}$ (see Line~\ref{dom1s}~in~Algorithm~\ref{dominanceCheck}), then $\ell'$ can not lead to a better  solution than $\ell$, hence it should not be stored. In such a case, $\ell$ dominates $\ell'$.

For the labels defined for an SN, one can not conclude domination based on their actual AoIs or AoI costs. For example, consider node $v_{sn}$, and two labels $\ell'$ and $\ell$. Label $\ell$ corresponds to a partial solution in which the UAV has stayed in the BS during time interval $[0,n-t_{0s}]$, and then it moved from BS to SN $s$ during time interval $(n-t_{0s},n]$. Label $\ell'$ corresponds a partial solution in which the UAV has visited many SNs during time interval $[0,n]$ without any return to the BS. Since in neither of the solutions any data is not delivered to the BS, both $\ell$ and $\ell'$ have the same AoI vectors and AoI costs, but $\ell$ has more battery than $\ell'$. Thus, $\ell$ seemingly  dominates $\ell'$. However, $\ell$ is in fact less promising  in leading to a better solution than $\ell'$. Because, in $\ell$ the UAV only stays in the BS for most of time duration,  while in $\ell'$ data of many SNs are collected which intuitively should give a better solution than $\ell$, even though one can not guarantee this.
To deal with such scenarios,  we use $\mathbf{\hat{a}}$ and $\hat{h}$ for comparison instead of $\mathbf{{a}}$ and ${h}$, and remove the labels that are less promising. We again use the term ``domination''.
Here, $\mathbf{\hat{a}}$ is a vector defined for the SNs, and for each SN, it contains the amount of time passed since data collection took place for this SN, i.e., $\mathbf{\hat{a}}(t)=t-\mathbf{z}(t)$. $\hat{h}$ is the cost calculated based on $\mathbf{\hat{a}}$ (see Line~\ref{hhat} in Algorithm~\ref{hhat}), namely  $\hat{h}$ captures the AoI cost if  the collected information  were delivered to the BS. Note that for $s=0$, we have $\mathbf{u}(t)=\mathbf{z}(t)$, and hence $\mathbf{\hat{a}}=\mathbf{a}$ and $\hat{h}=h$.

 DCA is shown in Algorithm~\ref{dominanceCheck}.
If a stored label $\ell$ dominates a new label $\ell'$, DCA discards $\ell'$, see Line~\ref{dom1s}-\ref{dom1f}. Conversely, if $\ell'$ dominates $\ell$, DCA removes $\ell$, see Lines~\ref{dom2s}-\ref{dom2f}. 
If $\ell'$ is not discarded and less than $K$ labels are stored, DCA stores $\ell'$, see Lines~\ref{notfull1}-\ref{notfull2}. If none of these applies, there is no capacity to store more labels. In this case, If $\hat{h}_\ell < \max_{\ell\in \mathcal{L}_{s'n'}}\{\hat{h}_{\ell}\}$, DCA removes the label with the maximum AoI cost and stores $\ell'$ instead, see Lines~\ref{full1}-\ref{full2}. Otherwise, $\ell'$ is discarded.

For complexity of DCA, calculating $\hat{h}_\ell$ is of complexity $O(ST)$. Also, as a new label $\ell'$ needs to be compared with a maximum of $K$ labels, and each comparison concerns two vectors of size $S$, the complexity  is of $O(KS)$. Thus, DCA is of complexity $O(\max\{ST,KS\})$. For the complexity of GLA, for each slot and node, there are a maximum of $K$ labels, and for each label a maximum of $S+N-1$ choices exist in which $S$ is the number of candidate nodes to visit and $N-1$ corresponds to staying in the BS of different lengths of time. As for each choice, the value of $h_\ell$ needs to be calculated and DCA needs to be run, the complexity of GLA is of $O((N-1)(S+1)K(S+N-1)ST\max\{ST,KS\}))$.
\begin{algorithm}
\caption{Graph labeling algorithm (GLA)}\label{GLA2}
\begin{algorithmic}[1]
\algsetup{linenosize=\tiny}
\small
\REQUIRE $K$, $B$, $\mathbf{b}$, $\mathbf{t}$, $w_{min}$, $f_s(\cdot)$ for $s \in \mathcal{S}$, $g(\cdot)$
\ENSURE A schedule for AUS

\STATE $\mathcal{L}_{00}\leftarrow(B,[0,\dots,0],[0,\dots,0],\{\},0)$
\FOR{$n=1:N-1$}
\FOR{$s\in \mathcal{S}^+$}
\FOR{$\ell\in \mathcal{L}_{sn}$}
\FOR{$s'\in\mathcal{S}^+\setminus\mathcal{M}_\ell$}
\IF{$s=0~\text{and}~s'=0$}\label{if00S}
\FOR{$n'=n+w_{min}:N$}
\STATE$B_{\ell'}\leftarrow B_\ell+g(n'-n)$

\STATE$\mathbf{z}_{\ell'}\leftarrow \mathbf{z}_\ell$,~$\mathbf{a}_{\ell'}\leftarrow \mathbf{a}_\ell+(n'-n)\tau$
\STATE $h_{\ell'}\leftarrow h_{\ell}+\sum_{s\in\mathcal{S}}\sum_{i=1}^{w}f_s(a_\ell(s)+i\tau)$
\STATE $\mathcal{L}_{s'n'}\leftarrow\text{DCA}\big(\ell^\prime,s,n,s^\prime,n',\mathcal{L}_{s'n'}\big)$
\ENDFOR \label{if00F}
\ELSIF{$s\in\mathcal{S}^+~\text{and}~s'\in \mathcal{S}$~\text{and}~$s\neq s'$}\label{if0SS}
\IF {$(\tau(n+t_{ss'}+t_{s'0})\le T \text{~and~} b_{ss'}+b_{s'0}\le B_\ell)$}
\STATE$n'\leftarrow n+t_{ss'}$,~$B_{\ell'}\leftarrow B_\ell-b_{ss'}$
\STATE$\mathbf{z}_{\ell'} \leftarrow \mathbf{z}_\ell$,~$\mathbf{z}_{\ell'}(s') \leftarrow n'\tau$,
~$\mathbf{a}_{\ell'}\leftarrow \mathbf{a}_\ell$
\STATE $h_{\ell'}\leftarrow h_\ell+\sum_{s\in\mathcal{S}}\sum_{i=1}^{t_{ss'}}f_s(a_{\ell}(s)+i\tau)$
\STATE $\mathcal{L}_{s'n'}\leftarrow\text{DCA}\big(\ell^\prime,s,n,s^\prime,n',\mathcal{L}_{s'n'}\big)$
\ENDIF\label{if0SF}
\ELSIF{$s\in\mathcal{S}~\text{and}~s'=0$}\label{ifS0S}
\STATE$n'\leftarrow n+t_{ss'}$,~$B_{\ell'}\leftarrow B_\ell-b_{ss'}$
\STATE $\mathbf{z}_{\ell'} \leftarrow \mathbf{z}_\ell$,~$\mathbf{a}_{\ell'}\leftarrow n'\tau-\mathbf{z}_\ell$
\STATE $h_{\ell'}\leftarrow h_\ell+\sum_{s\in\mathcal{S}}\Big(\sum_{i=1}^{t_{ss'}-1}f_s(a_{\ell}(s)+i\tau)+f_s(a'_{\ell}(s))\Big)$
\STATE $\mathcal{L}_{s'n'}\leftarrow\text{DCA}\big(\ell^\prime,s,n,s^\prime,n',\mathcal{L}_{s'n'}\big)$
\ENDIF \label{ifS0F}

\ENDFOR
\ENDFOR
\ENDFOR
\ENDFOR
\STATE $\ell^*\leftarrow \argmax_{\ell \in \mathcal{L}_{0N}}{f_\ell}$
\end{algorithmic}
\end{algorithm}
\begin{algorithm}
\caption{Dominance Check Algorithm (DCA)}\label{dominanceCheck}
\begin{algorithmic}[1]
\algsetup{linenosize=\tiny}
\small
\REQUIRE $\ell'$, $s$, $n$, $s'$, $n'$, $\mathcal{L}_{s'n'}$
\ENSURE $\mathcal{L}_{s'n'}$
\STATE $X\leftarrow 1$,~$\mathbf{\hat{a}}_{\ell'}\leftarrow \tau n'-\mathbf{z}_{\ell'}$,~$i\leftarrow 1$
\STATE $\hat{h}_{\ell'} \leftarrow h_\ell+\sum_{s\in\mathcal{S}}\Big(\sum_{i=1}^{n'-1}f_s(a_{\ell}(s)+i\tau)+f_s(\hat{a}_{\ell}(s))\Big)$\label{hhat}
\WHILE{($X=1~\text{and}~i\le L_{s'n'}$)}
\STATE $\ell \leftarrow \mathcal{L}_{s'n'}(i)$,~$\mathbf{\hat{a}}_{\ell}\leftarrow \tau n'-\mathbf{z}_{\ell}$,~$i \leftarrow i+1$
\IF{$(B_{\ell'}\le B_{\ell}~\text{and}~\mathbf{\hat{a}}_{\ell'}\ge \mathbf{\hat{a}}_{\ell}~\text{and}~\hat{h}_{\ell'}> \hat{h}_{\ell})$~or~\\
~~~($B_{\ell'}< B_{\ell}~\text{and}~\mathbf{\hat{a}}_{\ell'}\ge \mathbf{\hat{a}}_{\ell}~\text{and}~\hat{h}_{\ell'}\ge \hat{h}_{\ell}$)~or~\\
~~~($B_{\ell'}\le B_{\ell}~\text{and}~\mathbf{\hat{a}}_{\ell'}> \mathbf{\hat{a}}_{\ell}~\text{and}~\hat{h}_{\ell'}\ge \hat{h}_{\ell}$)}\label{dom1s}
\STATE  $X\leftarrow 0$\label{dom1f}
\ENDIF
\ENDWHILE
\IF {$X=1$}
\FOR{$\ell \in \mathcal{L}_{s'n'}$}\label{dom2s}
\IF{($B_{\ell'}\ge B_{\ell}~\text{and}~\mathbf{\hat{a}}_{\ell'}\le \mathbf{\hat{a}}_{\ell}~\text{and}~\hat{h}_{\ell'}< \hat{h}_{\ell})$~or~\\
~~~($B_{\ell'}> B_{\ell}~\text{and}~\mathbf{\hat{a}}_{\ell'}\le \mathbf{\hat{a}}_{\ell}~\text{and}~\hat{h}_{\ell'}\le \hat{h}_{\ell}$)~or~\\
~~~($B_{\ell'}\ge B_{\ell}~\text{and}~\mathbf{\hat{a}}_{\ell'}< \mathbf{\hat{a}}_{\ell}~\text{and}~\hat{h}_{\ell'}\le \hat{h}_{\ell}$)}
\STATE Delete label $\ell$ from $\mathcal{L}_{s'n'}$\label{dom2f}
\ENDIF
\ENDFOR
\IF{$L_{s'n'}<K$}\label{notfull1}
\STATE $\mathcal{L}_{s'n'}\leftarrow \mathcal{L}_{s'n'}\cup \{\ell'\}$\label{notfull2}
\ELSIF {$h_{\ell'}<\underset{\ell \in \mathcal{L}_{s'n'}}{\max\{\hat{h}_{\ell}\}}$}\label{full1}
\STATE Delete $\underset{\ell \in \mathcal{L}_{s'n'}}{\argmax\{\hat{h}_{\ell}\}}$ from $\mathcal{L}_{s'n'}$
\STATE $\mathcal{L}_{s'n'}\leftarrow \mathcal{L}_{s'n'} \cup \{\ell'\}$\label{full2}
\ENDIF
\ENDIF
\end{algorithmic}
\end{algorithm}

\vspace{-2mm}
\section{Performance Evaluation}
We evaluate the performance of GLA against a greedy algorithm (GA). The core idea of GA is that the data of the SN with the largest AoI and the smallest travel time will be collected first. In each step, all time-feasible SNs that are not visited since the most recent departure from the BS are considered. Among them, GA selects the SN with the highest AoI cost with respect to the travel time. The UAV goes to the SN if this leads to a lower overall AoI cost than going back to the BS. Otherwise, the UAV examines the next SN. If none of the SNs leads to a lower AoI cost, the UAV goes to the BS for data delivery.

We consider a UAV with  battery capacity of $25$ min of flying time, and a recharging time  of $50$ min~\cite{Galkin2019}. As~in~\cite{jia2019} we consider a circular area of radius $5000$~m. The SNs are randomly located such that the travel time between any two locations is at least $0.5$~min. The travel times are calculated based on a UAV velocity of $1200$~m/min \cite{liu2018,jia2019}. The length of slots is set to $\tau=1$~min. We have used several linear and non-linear functions \cite{sun2017,sun2019} to model the AoI costs of SNs.

Figs.~\ref{impactT}-\ref{impactS} show the performance results. Fig.~\ref{impactT} shows the impact of duration of scheduling time horizon. As it can be seen the AoI cost for both algorithms increases with respect to $T$ where after some time the AoI cost become stable. When $T=25$, GLA outperforms GA by $12\%$ and this increases to about $28\%$ when $T=150$.

Fig.~\ref{impactS} shows the impact of number of SNs. Clearly, as can be seen larger number of SNs results in higher AoI cost. When $S=5$,  GLA outperforms GA by about $9\%$, and this increases to about $35\%$ when $S$ increases to $25$. Because a larger number of SNs results in a more difficult problem, and hence more difficult for GA to maintain the quality of solution.

Fig.~\ref{impactL} shows how normalized solution quality and solution time increase with respect to $K$. Increasing $K$, which means a higher computing time, has very clear impact on solution improvement. There is however a saturation effect (when $K$ grows beyond $10$), which is likely attributed to that the performance of GLA is approaching global optimality.

\section{Conclusion}
This paper has studied UAV scheduling for data collection while  accounting for the battery capacity of the UAV. The objective is to minimize the overall average AoI cost over a given time horizon. In addition to complexity analysis, we have proposed a tailored graph labeling algorithm featuring a mechanism for a trade-off between complexity and solution quality, and with a performance that clearly goes  beyond that of greedy scheduling. 
\begin{figure}[ht!]
\centering
\includegraphics[scale=0.41]{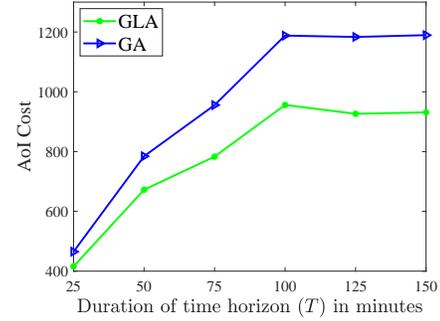}
\begin{center}
\vspace{-4mm}
\caption{Impact of $T$ when $K=1$, $S=20$, and $B=25$.}
\label{impactT}
\end{center}
\end{figure}

\begin{figure}[ht!]
\centering
\includegraphics[scale=0.41]{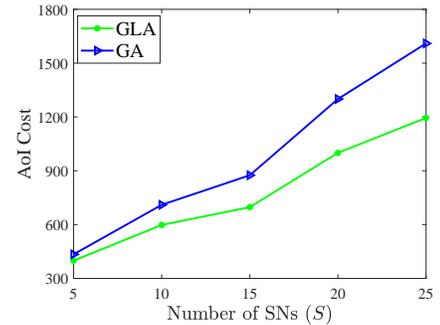}
\begin{center}
\vspace{-4mm}
\caption{Impact of $S$ when $K=1$, $T=100$, and $B=25$.}
\label{impactS}
\end{center}
\end{figure}

\begin{figure}[ht!]
\centering
\includegraphics[scale=0.41]{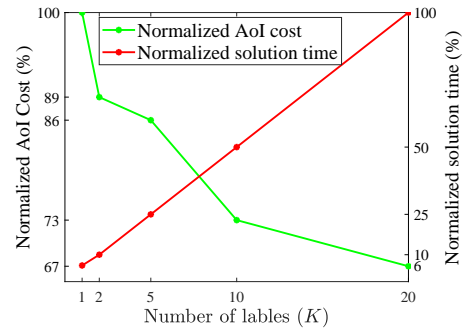}
\begin{center}
\vspace{-4mm}
\caption{Impact of $K$ when $T=100$, $S=20$, and $B=25$.}
\label{impactL}
\end{center}
\end{figure}


\bibliographystyle{IEEEtran}
\bibliography{IEEEabrv,ForIEEEBib}

\begin{thebibliography}{10}
\providecommand{\url}[1]{#1}
\csname url@samestyle\endcsname
\providecommand{\newblock}{\relax}
\providecommand{\bibinfo}[2]{#2}
\providecommand{\BIBentrySTDinterwordspacing}{\spaceskip=0pt\relax}
\providecommand{\BIBentryALTinterwordstretchfactor}{4}
\providecommand{\BIBentryALTinterwordspacing}{\spaceskip=\fontdimen2\font plus
\BIBentryALTinterwordstretchfactor\fontdimen3\font minus
  \fontdimen4\font\relax}
\providecommand{\BIBforeignlanguage}[2]{{%
\expandafter\ifx\csname l@#1\endcsname\relax
\typeout{** WARNING: IEEEtran.bst: No hyphenation pattern has been}%
\typeout{** loaded for the language `#1'. Using the pattern for}%
\typeout{** the default language instead.}%
\else
\language=\csname l@#1\endcsname
\fi
#2}}
\providecommand{\BIBdecl}{\relax}
\BIBdecl

\bibitem{Zeng2016}
Y.~{Zeng}, R.~{Zhang}, and T.~J. {Lim}, ``Wireless communications with unmanned
  aerial vehicles: opportunities and challenges,'' \emph{IEEE Communications
  Magazine}, vol.~54, no.~5, pp. 36--42, 2016.

\bibitem{Kaul2012}
S.~{Kaul}, R.~{Yates}, and M.~{Gruteser}, ``Real-time status: How often should
  one update?'' in \emph{IEEE INFOCOM}, 2012, pp. 2731--2735.

\bibitem{liu2018}
J.~Liu, X.~Wang, B.~Bai, and H.~Dai, ``Age-optimal trajectory planning for
  {UAV}-assisted data collection,'' in \emph{IEEE INFOCOM WKSHPS}, 2018, pp.
  553--558.

\bibitem{tong2019}
P.~{Tong}, J.~{Liu}, X.~{Wang}, B.~{Bai}, and H.~{Dai}, ``{UAV}-enabled
  age-optimal data collection in wireless sensor networks,'' in \emph{IEEE ICC
  WKSHPS}, 2019, pp. 1--6.

\bibitem{jia2019}
Z.~{Jia}, X.~{Qin}, Z.~{Wang}, and B.~{Liu}, ``Age-based path planning and data
  acquisition in {UAV}-assisted {IoT} networks,'' in \emph{IEEE ICC WKSHPS},
  2019, pp. 1--6.

\bibitem{tri2019}
V.~{Tripathi}, R.~{Talak}, and E.~{Modiano}, ``Age optimal information
  gathering and dissemination on graphs,'' in \emph{IEEE INFOCOM}, 2019, pp.
  2422--2430.

\bibitem{abd2018}
M.~A. {Abd-Elmagid} and H.~S. {Dhillon}, ``Average peak age-of-information
  minimization in {UAV}-assisted {IoT} networks,'' \emph{IEEE Trans. Veh.
  Technol.}, vol.~68, no.~2, pp. 2003--2008, 2019.

\bibitem{garey1979computers}
M.~R. Garey and D.~S. Johnson, \emph{Computers and Intractability; A Guide to
  the Theory of NP-Completeness}.\hskip 1em plus 0.5em minus 0.4em\relax W. H.
  Freeman \& Co., 1990.

\bibitem{Galkin2019}
B.~{Galkin}, J.~{Kibilda}, and L.~A. {DaSilva}, ``{UAV}s as mobile
  infrastructure: Addressing battery lifetime,'' \emph{IEEE Commun. Mag.},
  vol.~57, no.~6, pp. 132--137, 2019.

\bibitem{sun2017}
Y.~{Sun}, E.~{Uysal-Biyikoglu}, R.~D. {Yates}, C.~E. {Koksal}, and N.~B.
  {Shroff}, ``Update or wait: How to keep your data fresh,'' \emph{IEEE Trans.
  Inf. Theory}, vol.~63, no.~11, pp. 7492--7508, 2017.

\bibitem{sun2019}
Y.~{Sun} and B.~{Cyr}, ``Sampling for data freshness optimization: Non-linear
  age functions,'' \emph{IEEE J. Commun. Networks}, vol.~21, no.~3, pp.
  204--219, 2019.

\end{thebibliography}
\end{document}